\newtheorem{theorem}{Theorem}[section]
\theoremstyle{remark}
\title{A Wilcoxon--Mann--Whitney type test \\ for infinite dimensional data}
\author{\vspace{0.3in} Anirvan Chakraborty ~and Probal Chaudhuri}
\date{}
\begin{document}

\maketitle
\vspace{-0.6in}
\begin{center}
 Theoretical Statistics and Mathematics Unit, \\
 Indian Statistical Institute \\ 
 203, B. T. Road, Kolkata - 700108, INDIA. \\
 emails: anirvan\_r@isical.ac.in, probal@isical.ac.in
\end{center}
\vspace{0.15in}
\begin{abstract}
The Wilcoxon--Mann--Whitney test is a robust competitor of the t-test in the univariate setting. For finite dimensional multivariate data, several extensions of the Wilcoxon--Mann--Whitney test have been shown to have better performance than Hotelling's $T^{2}$ test for many non-Gaussian distributions of the data. In this paper, we study a Wilcoxon--Mann--Whitney type test based on spatial ranks for data in infinite dimensional spaces. We demonstrate the performance of this test using some real and simulated datasets. We also investigate the asymptotic properties of the proposed test and compare the test with a wide range of competing tests.
\vspace{0.1in} \\
\textbf{Keywords}: {Brownian motion, functional data, G\^ateaux derivative, smooth Banach spaces, spatial rank, t processes, two-sample problem, U-statistics.}
\end{abstract}

\section{Introduction} 
\label{1} 
\indent For univariate data, the Wilcoxon--Mann--Whitney test is known to have better power than the t-test for several non-Gaussian distributions (see, e.g., \cite{HSS99}). Various extensions of the Wilcoxon--Mann--Whitney test have been studied for multivariate data in finite dimensional spaces (see, e.g., \cite{PS71}, \cite{RP90}, \cite{LS93}, \cite{CM97}, \cite{CC99} and \cite{Oja99}), and these extensions too outperform Hotelling's $T^{2}$ test for a number of non-Gaussian multivariate distributions. Nowadays, we often have to analyze data, which are curves or functions, e.g., the ECG curves of patients, the temperature curves of different regions, the spectrometry readings over a range of wavelengths etc. Such data, popularly known as functional data, can be conveniently handled by viewing them as random observations from probability distributions in infinite dimensional spaces, e.g., the space of functions on an interval. For testing the equality of means of two functional datasets, \cite{HKR13} proposed two test statistics based on orthogonal projections of the difference between the sample mean functions. One of those statistics is same as Hotelling's $T^{2}$ statistic based on a finite number of such projections. \cite{CFF04} and \cite{ZC07} studied two $L_{2}$-norm based tests for functional analysis of variance and functional linear models, respectively. For the problem of testing the equality of two mean functions, these two statistics reduce to a constant multiple of the $L_{2}$-norm of the difference between the sample mean functions. A two sample test for the equality of the means based on this latter statistic was studied by \cite{ZPZ10}. In a different direction, \cite{BS96}, \cite{FL98}, \cite{CQ10} and \cite{SKK13} studied some tests for comparing the means of two finite dimensional datasets for which the data dimension is larger than the sample size, and it grows with the sample size. These authors worked in a setup, which is different from the infinite dimensional setup considered in this paper. Consequently, the tests and the results obtained by these authors are quite different from ours. All of the above-mentioned tests for functional and high dimensional data perform poorly when the observations have non-Gaussian distributions with heavy-tails. \\
\indent Some of the Wilcoxon--Mann--Whitney type tests for finite dimensional data in $\mathbb{R}^{d}$, e.g., those defined using simplices (see, e.g., \cite{LS93} and \cite{Oja99}) or those based on interdirections (see, e.g., \cite{RP90}), cannot be extended into infinite dimensional spaces due to their dependence on the finite dimensional coordinate system in $\mathbb{R}^{d}$. Further, a test that involves standardization by some covariance matrix computed from the sample (see, e.g., \cite{PS71} and \cite{Oja99}) cannot be used due to the singularity of such a sample covariance matrix when the data dimension exceeds the sample size. \\
\indent Many of the function spaces, where functional data lie, are infinite dimensional Banach spaces. In Section $2$, we develop a Wilcoxon--Mann--Whitney type test based on spatial ranks for data lying in those spaces. We show that the proposed test statistic has an asymptotic Gaussian distribution. We implement the test using this asymptotic distribution and demonstrate its performance using some real benchmark data. In Section $3$, we derive the asymptotic distribution of our Wilcoxon--Mann--Whitney type test statistic under some sequences of shrinking location shift models. We carry out an asymptotic power comparison between our test and several other tests for functional data. It is observed that our Wilcoxon--Mann--Whitney type test has superior performance than these competing tests in most of the heavy-tailed models as well as in some of the Gaussian models considered. In Section $4$, we report the results from a detailed simulation study comparing the finite sample performance of our test with that of a wide range of two sample tests available in the literature for infinite dimensional data.

\section{The construction and the implementation of the test}
\label{2}
\indent For two random samples $X_{1},\ldots,X_{m}$ and $Y_{1},\ldots,Y_{n}$ in $\mathbb{R}$, the Wilcoxon--Mann--Whitney statistic is defined as $\sum_{i=1}^{m}\sum_{j=1}^{n} sign(Y_{j} - X_{i})$ (see, e.g., \cite{HSS99}). Since for any $x \neq 0$, $sign(x)$ is the derivative of $|x|$, we  define a notion of spatial rank for probability distributions in Banach spaces as follows. Let ${\bf X}$ be a random element in a Banach space ${\cal X}$, and ${\cal X}^{*}$ be the dual of ${\cal X}$, which is the Banach space of real-valued continuous linear functions on ${\cal X}$. Suppose that ${\cal X}$ is smooth, i.e., the norm $||.||$ in ${\cal X}$ is G\^ateaux differentiable (see, e.g., Section $2$, Chapter $4$ in \cite{BV10}) at each ${\bf x} \neq {\bf 0}$ with G\^ateaux derivative, say, $SGN_{{\bf x}} \in {\cal X}^{*}$. In other words, we assume that $\lim_{t \rightarrow 0} t^{-1}(||{\bf x} + t{\bf h}|| - ||{\bf x}||) = SGN_{{\bf x}}({\bf h})$ for all ${\bf x} \neq {\bf 0}$ and ${\bf h} \in {\cal X}$. In a Hilbert space ${\cal X}$, $SGN_{{\bf x}} = {\bf x}/||{\bf x}||$. If ${\cal X} = L_{p}[a,b]$ for some $p \in (1,\infty)$, which is the Banach space of all functions ${\bf x} : [a,b] \rightarrow \mathbb{R}$ satisfying $\int_{a}^{b} |{\bf x}(s)|^{p}ds < \infty$, then $SGN_{{\bf x}}({\bf h}) = \int_{a}^{b} sign\{{\bf x}(s)\}|{\bf x}(s)|^{p-1}{\bf h}(s)ds/||{\bf x}||^{p-1}$ for all ${\bf h} \in L_{p}[a,b]$. We define $SGN_{{\bf x}} = {\bf 0}$ if ${\bf x} = {\bf 0}$. The spatial rank of ${\bf x} \in {\cal X}$ with respect to the distribution of a random element ${\bf X} \in {\cal X}$ is defined as $S_{{\bf x}} = E(SGN_{{\bf x} - {\bf X}})$, where the expectation is in the Bochner sense (see, e.g., Section $2$, Chapter $3$ in \cite{AG80}). In a Hilbert space ${\cal X}$, $S_{{\bf x}} = E\{({\bf x}-{\bf X})/||{\bf x} - {\bf X}||\}$, and the spatial rank defined in this way has been studied in  $\mathbb{R}^{d}$ by \cite{Chau96}, \cite{CM97}, \cite{Oja10} and \cite{HM11}. \\
\indent Let ${\bf X}_{1},\ldots,{\bf X}_{m}$ and ${\bf Y}_{1},\ldots,{\bf Y}_{n}$ be independent observations from two probability measures $P$ and $Q$ on a smooth Banach space ${\cal X}$. If we assume that $P$ and $Q$ differ by a shift $\Delta \in {\cal X}$ in the location, our Wilcoxon--Mann--Whitney type statistic for testing the hypothesis $H_{0} : \Delta = {\bf 0}$ against $H_{1} : \Delta \neq {\bf 0}$ is defined as $T_{WMW} = (mn)^{-1} \sum_{i=1}^{m} \sum_{j=1}^{n} SGN_{{\bf Y}_{j} - {\bf X}_{i}}$. Note that $T_{WMW}$ is a Banach space valued U-statistic (see, e.g., \cite{Boro96}) and is an unbiased estimator of $E(SGN_{{\bf Y} - {\bf X}})$. If $H_{0}$ holds, we have $E(SGN_{{\bf Y} - {\bf X}}) = {\bf 0}$. So, we reject the null hypothesis for large values of $||T_{WMW}||$. It is straightforward to verify that for any $c \in \mathbb{R}$, ${\bf a} \in {\cal X}$ and a bijective linear isometry $B$ on ${\cal X}$, the hypotheses $H_{0}$, $H_{1}$, and the test statistic remain invariant under the transformation ${\bf X} \mapsto cB({\bf X}) + {\bf a}$ and ${\bf Y} \mapsto cB({\bf Y}) + {\bf a}$. \\
\indent We shall now study the asymptotic distribution of the statistic $T_{WMW}$. A Banach space ${\cal X}$ is said to be of type $2$ if there exists a constant $b > 0$ such that for any $n\geq 1$ and independent zero mean random elements ${\bf U}_{1},{\bf U}_{2},\ldots,{\bf U}_{n}$ in ${\cal X}$ satisfying $E(||{\bf U}_{i}||^{2}) < \infty$ for all $i=1,2,\ldots,n$, we have $E(||\sum_{i=1}^{n} {\bf U}_{i}||^{2}) \leq b\sum_{i=1}^{n} E(||{\bf U}_{i}||^{2})$ (see, e.g., Section $7$, Chapter $3$ in \cite{AG80}). Type $2$ Banach spaces are the only Banach spaces, where the central limit theorem holds for every sequence of independent and identically distributed random elements, whose squared norms have finite expectations. It is known that Hilbert spaces and the  $L_{p}$ spaces with $p \in [2,\infty)$ are type $2$ Banach spaces. We denote by $G({\bf m},{\bf C})$ the distribution of a Gaussian random element (say, ${\bf W}$) in a separable Banach space ${\cal X}$ with mean ${\bf m} \in {\cal X}$ and covariance ${\bf C}$, where ${\bf C} : {\cal X}^{*} \times {\cal X}^{*} \rightarrow \mathbb{R}$ is a symmetric nonnegative definite continuous bilinear functional. Note that for any ${\bf l} \in {\cal X}^{*}$, ${\bf l}({\bf W})$ has a Gaussian distribution on $\mathbb{R}$ with mean ${\bf l}({\bf m})$ and variance ${\bf C}({\bf l},{\bf l})$. Define $\mu = E(SGN_{{\bf Y} - {\bf X}})$. We denote by $\Gamma_{1}, \Gamma_{2} : {\cal X}^{**} \times {\cal X}^{**} \rightarrow \mathbb{R}$ the symmetric nonnegative definite continous bilinear functionals given by $\Gamma_{1}({\bf f},{\bf g}) = E[{\bf f}\{E(SGN_{{\bf Y} - {\bf X}}\mid{\bf X})\}{\bf g}\{E(SGN_{{\bf Y} - {\bf X}}\mid{\bf X})\}] - {\bf f}(\mu){\bf g}(\mu)$, and $\Gamma_{2}({\bf f},{\bf g}) = E[{\bf f}\{E(SGN_{{\bf Y} - {\bf X}}\mid{\bf Y})\}{\bf g}\{E(SGN_{{\bf Y} - {\bf X}}\mid{\bf Y})\}] - {\bf f}(\mu){\bf g}(\mu)$, where ${\bf f}, {\bf g} \in {\cal X}^{**}$. Note that for a random element ${\bf Z}$ in a Banach space with $E(||{\bf Z}||) < \infty$, the conditional 
expectation of ${\bf Z}$ given ${\bf X}$ exists and can be properly defined (see, e.g., Section $4$, Chapter II in \cite{VTC87} for the relevant details).
\begin{theorem} \label{thm1}
Let $N = m + n$ and $m/N \rightarrow \gamma \in (0,1)$ as $m, n \rightarrow \infty$. Also, assume that the dual space ${\cal X}^{*}$ is a separable and type $2$ Banach space. Then, for any two probability measures $P$ and $Q$ on ${\cal X}$, $(mn/N)^{1/2} (T_{WMW} - \mu)$ converges {\it weakly} to $G\{{\bf 0},(1-\gamma)\Gamma_{1} + \gamma\Gamma_{2}\}$ as $m,n \rightarrow \infty$.
\end{theorem}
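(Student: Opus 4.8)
The plan is to treat $T_{WMW}$ as a two-sample U-statistic in the Banach space ${\cal X}^{*}$ with the \emph{bounded} kernel $h({\bf x},{\bf y}) = SGN_{{\bf y}-{\bf x}}$ (since $\|SGN_{{\bf z}}\| \le 1$ for every ${\bf z}$, all moments of the kernel norm are finite) and to analyze it through its Hoeffding decomposition. Writing $g_{1}({\bf x}) = E(SGN_{{\bf Y}-{\bf x}}) = E(SGN_{{\bf Y}-{\bf X}}\mid{\bf X}={\bf x})$ and $g_{2}({\bf y}) = E(SGN_{{\bf y}-{\bf X}}) = E(SGN_{{\bf Y}-{\bf X}}\mid{\bf Y}={\bf y})$, I would split
$$T_{WMW} - \mu = \frac{1}{m}\sum_{i=1}^{m}\{g_{1}({\bf X}_{i}) - \mu\} + \frac{1}{n}\sum_{j=1}^{n}\{g_{2}({\bf Y}_{j}) - \mu\} + R,$$
where the remainder $R = (mn)^{-1}\sum_{i}\sum_{j}\psi({\bf X}_{i},{\bf Y}_{j})$ is built from the doubly-centred kernel $\psi({\bf x},{\bf y}) = SGN_{{\bf y}-{\bf x}} - g_{1}({\bf x}) - g_{2}({\bf y}) + \mu$. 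Using $E\{g_{1}({\bf X})\} = E\{g_{2}({\bf Y})\} = \mu$, one checks that $\psi$ is degenerate, i.e. $E\{\psi({\bf x},{\bf Y})\} = E\{\psi({\bf X},{\bf y})\} = {\bf 0}$, and hence $E\{\psi({\bf X}_{i},{\bf Y}_{j})\mid{\bf X}_{i}\} = E\{\psi({\bf X}_{i},{\bf Y}_{j})\mid{\bf Y}_{j}\} = {\bf 0}$. The strategy is then to show that the scaled remainder is asymptotically negligible and that the scaled linear part satisfies a central limit theorem, and to combine the two by a Slutsky-type argument.

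The main technical hurdle is controlling $E\|R\|^{2}$, and this is exactly where the type $2$ hypothesis on ${\cal X}^{*}$ enters. I would bound $E\|\sum_{i}\sum_{j}\psi({\bf X}_{i},{\bf Y}_{j})\|^{2}$ by applying the type $2$ inequality twice via conditioning. Conditioning on ${\bf Y}_{1},\ldots,{\bf Y}_{n}$, the elements $V_{i} = \sum_{j}\psi({\bf X}_{i},{\bf Y}_{j})$ are independent across $i$ with conditional mean ${\bf 0}$ by degeneracy, so the type $2$ bound gives $E(\|\sum_{i}V_{i}\|^{2}\mid{\bf Y}_{1},\ldots,{\bf Y}_{n}) \le b\sum_{i}E(\|V_{i}\|^{2}\mid{\bf Y}_{1},\ldots,{\bf Y}_{n})$. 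Conditioning each $V_{i}$ on ${\bf X}_{i}$ and applying the type $2$ bound a second time over the independent, conditionally centred summands $\psi({\bf X}_{i},{\bf Y}_{j})$ yields $E\|V_{i}\|^{2} \le b\,n\,E\|\psi({\bf X}_{1},{\bf Y}_{1})\|^{2}$. Combining, $E\|\sum_{i}\sum_{j}\psi\|^{2} \le b^{2}mn\,E\|\psi({\bf X}_{1},{\bf Y}_{1})\|^{2}$, so that $E\|R\|^{2} \le b^{2}(mn)^{-1}E\|\psi({\bf X}_{1},{\bf Y}_{1})\|^{2}$. Since $\psi$ is bounded, $E\|\psi\|^{2} < \infty$, and therefore $(mn/N)\,E\|R\|^{2} \le b^{2}N^{-1}E\|\psi\|^{2} \to 0$, giving $(mn/N)^{1/2}R \to {\bf 0}$ in ${\cal X}^{*}$-norm, hence in probability.

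For the linear part I would invoke the central limit theorem in the separable type $2$ space ${\cal X}^{*}$, applicable because $g_{1}({\bf X}_{i}) - \mu$ and $g_{2}({\bf Y}_{j}) - \mu$ are i.i.d. within each sample, centred, and bounded in norm. Since $(mn/N)^{1/2}m^{-1} = (n/N)^{1/2}m^{-1/2}$ and $(mn/N)^{1/2}n^{-1} = (m/N)^{1/2}n^{-1/2}$, with $m/N \to \gamma$ and $n/N \to 1-\gamma$, the two scaled sample sums converge weakly to $G\{{\bf 0},(1-\gamma)\Gamma_{1}\}$ and $G\{{\bf 0},\gamma\Gamma_{2}\}$ respectively, the covariances $\Gamma_{1}$ and $\Gamma_{2}$ being precisely those of $g_{1}({\bf X})$ and $g_{2}({\bf Y})$ appearing in the statement. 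As the two sums depend on the independent samples $\{{\bf X}_{i}\}$ and $\{{\bf Y}_{j}\}$, their joint limit is the convolution $G\{{\bf 0},(1-\gamma)\Gamma_{1}+\gamma\Gamma_{2}\}$. Writing $(mn/N)^{1/2}(T_{WMW}-\mu)$ as this linear part plus $(mn/N)^{1/2}R$ and applying Slutsky's theorem with the negligibility of the remainder then gives the claimed weak limit. The only points needing extra care beyond this outline are the Bochner-integrability and measurability of the conditional expectations $g_{1}, g_{2}$, which are guaranteed by separability of ${\cal X}^{*}$ and boundedness of the kernel, and the identification of the convolution's covariance as the additive combination, both of which are routine.
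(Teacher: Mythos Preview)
Your proposal is correct and follows essentially the same route as the paper's proof: Hoeffding decomposition of the two-sample U-statistic, a double application of the type~2 inequality (conditioning first on the ${\bf Y}_{j}$'s and then on ${\bf X}_{i}$) to show $E\|R\|^{2} = O((mn)^{-1})$ via boundedness of the kernel, and the CLT for i.i.d.\ bounded summands in a separable type~2 Banach space applied separately to the two independent linear parts. The paper's argument differs only in notation and in quoting the explicit bound $\|\widetilde{h}\| \le 4$ rather than leaving $E\|\psi\|^{2}$ abstract.
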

\indent The implementation of the test can be done using the asymptotic distribution of $T_{WMW}$ under the null hypothesis. Under $H_{0}$, we have $\Gamma_{1} = \Gamma_{2}$. Let $c_{\alpha}$ denote the $100(1-\alpha)$th percentile of the distribution of $||G({\bf 0},\Gamma_{1})||$. Thus, our  test, which rejects $H_{0}$ if $||(mn/N)^{1/2}T_{WMW}|| > c_{\alpha}$ has asymptotic size $\alpha$. When ${\cal X}$ is a separable Hilbert space, $\Gamma_{1}$ has a spectral decomposition (see Theorem IV.$2.4$ in page $213$ and Proposition $1.9$ in page $161$ in \cite{VTC87}), which implies that $||G({\bf 0},\Gamma_{1})||^{2}$ is distributed as a weighted sum of independent chi-square random variables each with $1$ degree of freedom, and the weights are the eigenvalues of $\Gamma_{1}$. Further details about the implementation of our test are given below when we analyze some real data. It follows from Theorem \ref{thm1} that the asymptotic power of our test will be $1$ whenever $\mu \neq {\bf 0}$. This holds in particular if $Q$ differs from $P$ by a non-zero shift $\Delta$ in the location, ${\cal X}$ is a reflexive and strictly convex Banach space, and the distribution of ${\bf Y} - {\bf X}$ is nonatomic and not concentrated on a line in ${\cal X}$ (see, e.g., Theorem $4.14$ in \cite{Kemp87}). In other words, the test is consistent for location shift alternatives. \\
\indent We have applied our test based on $T_{WMW}$ to three real datasets, namely, the Coffee data, the Berkeley growth data and the Spectrometry data. The Coffee data is obtained from \texttt{http://www.cs.ucr.edu/$\sim$eamonn/time\_series\_data/} and contains spectroscopy readings taken at $286$ wavelength values for $14$ samples of each of the two varieties of coffee, namely, Arabica and Robusta. The Berkeley growth data is available in the R package ``fda'' (see \texttt{http://rss.acs.unt.edu/Rdoc/library/fda/html/growth.html}) and contains the heights of $39$ boys and $54$ girls measured at $31$ time points between the ages $1$ and $18$ years. The curves have been pre-smoothed using a monotone spline smoothing technique available in the R package ``fda''. The curves are recorded at $101$ equispaced ages in the interval $[1,18]$. The Spectrometry data is available at \texttt{http://www.math.univ-toulouse.fr/staph/npfda} and contains the spectrometric curves for $215$ meat units measured at $100$ wavelengths between $850$ nm and $1050$ nm. The data also contains the fat content of each meat unit, which is categorized into two classes, namely, ``$\leq 20\%$'' and ``$> 20\%$''. In all these three datasets, each observation can be viewed as an element in the separable Hilbert space $L_{2}[a,b]$. For instance, the spectrometric curves in the third dataset can be viewed as elements in the space $L_{2}[850,1050]$. \\
\indent In view of Theorem \ref{thm1} and the discussion following it, for all three real datasets, the asymptotic null distribution of $||(mn/N)^{1/2}T_{WMW}||$ can be expressed in terms of a weighted sum of independent chi-square random variables. Since only a few eigenvalues of the sample analog of $\Gamma_{1}$ are positive, we get a finite sum, and use its distribution, which can be simulated, to estimate the critical value of our test statistic. For each dataset, the norm in the definition of $SGN_{{\bf x}}$ used in $T_{WMW}$ is computed as the norm of the Euclidean space whose dimension is the number of time points over which the sample curves in that dataset are observed. We have also applied the two sample version of the test studied by \cite{CFF04} and the two tests of \cite{HKR13} to these datasets. We have used the usual empirical pooled covariance for the two tests of \cite{HKR13}, and the numbers of projection directions used in these two tests are chosen using the cumulative variance method 
described in their paper. For the Coffee data, the p-value of our test based on $T_{WMW}$ is $0.072$, that of the test in \cite{CFF04} is $0.169$, and both the tests in \cite{HKR13} have the same p-value $0.273$. None of the tests yield a very strong evidence against the null hypothesis, and all of them fail to reject it at the $5\%$ level. However, among the four p-values, the one obtained using our test bears the strongest evidence in favour of the alternative hypothesis. The p-values of all four tests for both of the Berkeley growth data and the Spectrometry data are $0$ upto two decimal places. We have also applied these tests to randomly chosen $20\%$ subsamples of the two datasets instead of the full datasets in order to investigate whether there is any difference in the results obtained using these tests when the sample sizes are smaller. The random subsampling was repeated $1000$ times for each dataset to compute the proportion of times each test rejects the null hypothesis when the level is fixed 
at $5\%$ for each test. For the subsamples of the Berkeley growth data and the Spectrometry data, the proportions of rejections of the null hypothesis by our test based on $T_{WMW}$ are $0.829$ and $0.832$, respectively, while those proportions are $0.476$ and $0.712$, respectively, for the test in \cite{CFF04}. The proportions of rejections of the null hypothesis by one of the two tests in \cite{HKR13} are $0.271$ and $0.744$ for the subsamples of the Berkeley growth data and the Spectrometry data, respectively, while those proportions are $0.292$ and $0.778$, respectively, for the other test in their paper. Thus, for the Berkeley growth data, our test has the highest rate of rejection of the null hypothesis, and those rates for the other three tests are much lower. For the Spectrometry data, all four tests have fairly high rates of rejection of the null hypothesis, and the rate is highest for our test using $T_{WMW}$.

\section{Asymptotic powers of different tests under shrinking location shifts}
\label{3}
\indent In the previous section, we have established the consistency of our test for models with fixed location shifts. We shall now derive the asymptotic distribution of our test statistic under appropriate sequences of shrinking location shifts. Suppose that ${\bf Y}$ is distributed as ${\bf X} + \Delta_{N}$, where $\Delta_{N} = \delta(mn/N)^{-1/2}$ for some fixed non-zero $\delta \in {\cal X}$ and $N \geq 1$. Recall that $N = m + n$ is the total size of the two samples. For some of the Wilcoxon--Mann--Whitney type tests studied in the finite dimensional setting, such alternative hypotheses have been shown to be contiguous to the null, and this leads to nondegenerate limiting distributions of the test statistics under those alternatives (see, e.g., \cite{CM97}, \cite{CC99} and \cite{Oja99}). For our next theorem, we assume that the norm in ${\cal X}$ is twice G\^ateaux differentiable at every ${\bf x} \neq {\bf 0}$ (see, e.g., Chapter $4$, Section $6$ in \cite{BV10}). Let us denote the Hessian of the function 
${\bf x} \mapsto E(||{\bf Y} - {\bf X} + {\bf x}||)$ at ${\bf x}$ by $J_{{\bf x}} : {\cal X} \rightarrow {\cal X}^{*}$ when it exists. In other words, for every ${\bf h} \in {\cal X}$,
\begin{eqnarray}
E(SGN_{{\bf Y} - {\bf X} + {\bf x} + t{\bf h}}) = E(SGN_{{\bf Y} - {\bf X} + {\bf x}}) + tJ_{{\bf x}}({\bf h}) + {\bf R}(t),   \label{eq3.1.1}
\end{eqnarray}
where $||{\bf R}(t)||/t \rightarrow 0$ as $t \rightarrow 0$. It is known that the norms in Hilbert spaces and the $L_{p}$ spaces with $p \in [2,\infty)$ are twice G\^ateaux differentiable. Let ${\cal X} = L_{p}[a,b]$ for some $2 \leq p < \infty$ and $-\infty < a < b < \infty$. If $E(||{\bf Y} - {\bf X} + {\bf x}||^{-1}) < \infty$, it can be shown that $J_{{\bf x}}$ exists and is given by
\begin{eqnarray*}
\{J_{{\bf x}}({\bf z})\}({\bf w}) &=& (p-1)E\left[\frac{\int_{a}^{b} |{\bf Y}(s)-{\bf X}(s)+{\bf x}(s)|^{p-2}{\bf z}(s){\bf w}(s)ds}{||{\bf Y} - {\bf X} + {\bf x}||^{p-1}} \right. \\
&& - \ \left.\frac{\left\{\int_{a}^{b} |{\bf Y}(s)-{\bf X}(s)+{\bf x}(s)|^{p-1}{\bf z}(s)ds\right\} \left\{\int_{a}^{b} |{\bf Y}(s)-{\bf X}(s)+{\bf x}(s)|^{p-1}{\bf w}(s)ds\right\}}{||{\bf Y} - {\bf X} + {\bf x}||^{2p-1}} \right],
\end{eqnarray*}
where ${\bf z}, {\bf w}$ and ${\bf x} \in L_{p}[a,b]$.
\begin{theorem}  \label{thm2}
As before, let $N = m + n$, $m/N \rightarrow \gamma \in (0,1)$ as $m, n \rightarrow \infty$, and ${\cal X}^{*}$ is a separable and type $2$ Banach space. Also, assume that the distribution of ${\bf X}$ is nonatomic and $J_{{\bf 0}}$ exists. Then, under the sequence of shrinking location shifts described at the beginning of this section, $(mn/N)^{1/2} T_{WMW}$ converges {\it weakly} to $G\{J_{{\bf 0}}(\delta),\Gamma_{1}\}$ as $m, n \rightarrow \infty$.
\end{theorem}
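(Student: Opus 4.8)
The plan is to split $(mn/N)^{1/2}T_{WMW}$ into a deterministic drift coming from the shift and a centred random part, treat the two separately, and recombine by a Slutsky-type argument. Writing ${\bf Y}_{j} = {\bf X}'_{j} + \Delta_{N}$, where ${\bf X}'_{1},\ldots,{\bf X}'_{n}$ are independent copies of ${\bf X}$, we have $SGN_{{\bf Y}_{j} - {\bf X}_{i}} = SGN_{{\bf X}'_{j} - {\bf X}_{i} + \Delta_{N}}$, so the kernel has common mean $\mu_{N} := E(SGN_{{\bf X}' - {\bf X} + \Delta_{N}})$. I would then decompose $(mn/N)^{1/2}T_{WMW} = (mn/N)^{1/2}(T_{WMW} - \mu_{N}) + (mn/N)^{1/2}\mu_{N}$ and analyse each summand.

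For the drift term I would use the second-order G\^ateaux expansion $(\ref{eq3.1.1})$ of the map ${\bf x} \mapsto E(SGN_{{\bf X}' - {\bf X} + {\bf x}})$ at ${\bf x} = {\bf 0}$ with increment $\Delta_{N} = (mn/N)^{-1/2}\delta$. Since ${\bf X}$ and ${\bf X}'$ are i.i.d., the difference ${\bf X}' - {\bf X}$ is symmetric about ${\bf 0}$; and because $SGN$ is the G\^ateaux derivative of a norm it is odd, $SGN_{-{\bf u}} = -SGN_{{\bf u}}$, whence $E(SGN_{{\bf X}' - {\bf X}}) = {\bf 0}$ (nonatomicity of ${\bf X}$ ensures ${\bf X}' - {\bf X} \neq {\bf 0}$ almost surely, so the derivative is defined). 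The expansion therefore gives $\mu_{N} = (mn/N)^{-1/2}J_{{\bf 0}}(\delta) + {\bf R}$ with $(mn/N)^{1/2}\|{\bf R}\| \to 0$, so $(mn/N)^{1/2}\mu_{N} \to J_{{\bf 0}}(\delta)$. This is the routine part.

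For the centred term I would re-run the argument behind Theorem $\ref{thm1}$, but now in a triangular array because the distribution $Q = Q_{N}$ of ${\bf Y}$ varies with $N$. Taking the Hoeffding projection of the two-sample $U$-statistic, with $h_{1}({\bf x}) = E(SGN_{{\bf X}' - {\bf x} + \Delta_{N}})$ and $h_{2}({\bf y}) = E(SGN_{{\bf y} - {\bf X} + \Delta_{N}})$, the linear part equals $m^{-1}\sum_{i}\{h_{1}({\bf X}_{i}) - \mu_{N}\} + n^{-1}\sum_{j}\{h_{2}({\bf X}'_{j}) - \mu_{N}\}$, and the scaling factors satisfy $(mn/N)^{1/2}/m \to (1-\gamma)^{1/2}m^{-1/2}$ and $(mn/N)^{1/2}/n \to \gamma^{1/2}n^{-1/2}$. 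A crucial simplification is that $\|SGN_{{\bf u}}\|_{{\cal X}^{*}} = 1$ for every ${\bf u} \neq {\bf 0}$, so the kernel is uniformly bounded and all second moments are automatically finite; this lets me bound the degenerate remainder of the decomposition via the type-$2$ inequality and show it is $o_{P}(1)$ after scaling, and it makes a Lindeberg-type condition trivially hold. Applying the type-$2$ central limit theorem to each linear part produces Gaussian limits with covariances $(1-\gamma)\Gamma_{1}^{(N)}$ and $\gamma\Gamma_{2}^{(N)}$, where $\Gamma_{1}^{(N)},\Gamma_{2}^{(N)}$ are the versions of $\Gamma_{1},\Gamma_{2}$ computed under $Q_{N}$.

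The final ingredient is that these covariances converge to their null values: since $\Delta_{N} \to {\bf 0}$ and ${\bf u} \mapsto SGN_{{\bf u}}$ is continuous away from ${\bf 0}$ with $\|SGN_{{\bf u}}\|_{{\cal X}^{*}} \leq 1$, dominated convergence (again using nonatomicity) gives $h_{1}, h_{2} \to E(SGN_{{\bf X}' - {\bf X}} \mid \cdot)$ and hence $\Gamma_{1}^{(N)} \to \Gamma_{1}$ and $\Gamma_{2}^{(N)} \to \Gamma_{2}$; under the common null distribution these coincide, so the limiting covariance is $(1-\gamma)\Gamma_{1} + \gamma\Gamma_{1} = \Gamma_{1}$. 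Thus the centred term converges weakly to $G\{{\bf 0},\Gamma_{1}\}$, and adding the deterministic drift $J_{{\bf 0}}(\delta)$ yields the claimed limit $G\{J_{{\bf 0}}(\delta),\Gamma_{1}\}$. I expect the main obstacle to be the triangular-array step rather than the drift computation: specifically, controlling the degenerate part of the Hoeffding decomposition uniformly in $N$, and verifying that pointwise convergence of the covariance bilinear functionals together with the uniform boundedness of the summands is enough to upgrade to weak convergence of the linear statistics to the fixed Gaussian law on the separable space ${\cal X}$.
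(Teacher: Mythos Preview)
Your approach matches the paper's exactly: split $(mn/N)^{1/2}T_{WMW}$ into a drift $(mn/N)^{1/2}\mu_{N} \to J_{{\bf 0}}(\delta)$ handled by the expansion \eqref{eq3.1.1}, and a centred two-sample $U$-statistic whose Hoeffding remainder is controlled by the type-$2$ inequality just as in Theorem~\ref{thm1}, with the linear part treated as a triangular array and the covariances $\Gamma_{1}^{(N)}$ shown to converge by dominated convergence. The obstacle you single out is real and is precisely where the paper spends its effort: rather than an i.i.d.\ CLT, it invokes Corollary~7.8 of Araujo--Gin\'e for triangular arrays in separable type-$2$ spaces and verifies three conditions --- a Lindeberg-type bound (immediate from $\|SGN\|\leq 1$), pointwise convergence $\Gamma_{1}^{(N)}({\bf f},{\bf f}) \to \Gamma_{1}({\bf f},{\bf f})$ for each ${\bf f}\in{\cal X}^{**}$, and a flat-concentration condition (for an increasing dense sequence of finite-dimensional subspaces ${\cal F}_{k}\subseteq{\cal X}^{*}$, one shows $\lim_{k}\limsup_{N}\sum_{i}E[d^{2}\{\Psi_{N}({\bf X}_{i}),{\cal F}_{k}\}]=0$) that furnishes exactly the tightness needed to pass from covariance convergence to weak convergence in the infinite-dimensional target. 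One small gap to fill: you use continuity of ${\bf u}\mapsto SGN_{{\bf u}}$ on ${\cal X}\setminus\{{\bf 0}\}$, but smoothness (G\^ateaux differentiability of the norm) alone does not guarantee this; the paper derives it from the assumption that the norm is \emph{twice} G\^ateaux differentiable, which forces Fr\'echet differentiability and hence norm-to-norm continuity of the derivative.
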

\indent In order to compare the asymptotic power of our test with those of the tests available in \cite{CFF04} and \cite{HKR13}, we shall now study the asymptotic distributions of those test statistics under the sequences of shrinking shifts described at the beginning of this section. For the two sample problem in $L_{2}[a,b]$, the test statistic studied by \cite{CFF04} reduces to $T_{CFF} = m||\bar{{\bf X}} - \bar{{\bf Y}}||^{2}$. \cite{HKR13} studied the test statistics $T_{HKR1} = \sum_{k=1}^{L} (\langle\bar{{\bf X}} - \bar{{\bf Y}},\widehat{\psi}_{k}\rangle)^{2}$ and $T_{HKR2} =  \sum_{k=1}^{L} \widehat{\lambda}_{k}^{-1} (\langle\bar{{\bf X}} - \bar{{\bf Y}},\widehat{\psi}_{k}\rangle)^{2}$. Here, $\langle.,.\rangle$ denotes the inner product in $L_{2}[0,1]$, the $\widehat{\lambda}_{k}$'s denote the eigenvalues of the empirical pooled covariance of the ${\bf X}_{i}$'s and the ${\bf Y}_{j}$'s in descending order of magnitudes, and the $\widehat{\psi}_{k}$'s are the corresponding empirical eigenvectors. If ${\cal X} = \mathbb{R}^{d}$ and $L = d$, $T_{HKR2}$ reduces to Hotelling's $T^{2}$ statistic, and $T_{HKR1} = m^{-1}T_{CFF}$. We derive the asymptotic distributions of $T_{HKR1}$ and $T_{HKR2}$ in a separable Hilbert space. Since the statistic $T_{CFF}$ can be defined in any Banach space, we derive its asymptotic distribution in a separable and type $2$ Banach space.
\begin{theorem}  \label{thm7}
Once again, let $N = m + n$ and $m/N \rightarrow \gamma \in (0,1)$ as $m, n \rightarrow \infty$. Then, under the sequence of shrinking location shifts mentioned at the beginning of this section, we have the following. \\
(a) If $E(||{\bf X}||^{2}) < \infty$, $nN^{-1}T_{CFF}$ converges {\it weakly} to $||G(\delta,\Sigma)||^{2}$ as $m,n \rightarrow \infty$, where $\Sigma$ denotes the covariance of ${\bf X}$. \\
(b) Assume that for some $L \geq 1$, $\lambda_{1} > \ldots > \lambda_{L} > \lambda_{L+1} > 0$, where the $\lambda_{k}$'s are the eigenvalues of $\Sigma$ in decreasing order of magnitudes. If $E(||{\bf X}||^{4}) < \infty$, $mnN^{-1}T_{HKR1}$ converges {\it weakly} to $\sum_{k=1}^{L} \lambda_{k}\chi^{2}_{(1)}(\beta_{k}^{2}/\lambda_{k})$, and $mnN^{-1}T_{HKR2}$ converges {\it weakly} to $\sum_{k=1}^{L} \chi^{2}_{(1)}(\beta_{k}^{2}/\lambda_{k})$ as $m, n \rightarrow \infty$. Here, $\beta_{k} = \langle\delta,\psi_{k}\rangle$, $\chi^{2}_{(1)}(\beta_{k}^{2}/\lambda_{k})$ denotes the noncentral chi-square variable with $1$ degree of freedom and noncentrality parameter $\beta_{k}^{2}/\lambda_{k}$, and $\psi_{k}$ is the eigenvector corresponding to $\lambda_{k}$ for $k = 1, 2, \ldots, L$.
\end{theorem}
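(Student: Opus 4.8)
The plan is to reduce all three statements to a single weak limit, namely that of the centered and scaled mean difference ${\bf D}_N := (mn/N)^{1/2}(\bar{{\bf X}} - \bar{{\bf Y}})$, and then read off each part by an appropriate continuous mapping. First I would analyze ${\bf D}_N$. Writing $\mu_X = E({\bf X})$ and $\mu_Y = E({\bf Y}) = \mu_X + \Delta_N$, one has $E({\bf D}_N) = -(mn/N)^{1/2}\Delta_N = -\delta$, and
\[
{\bf D}_N = (n/N)^{1/2} m^{1/2}(\bar{{\bf X}} - \mu_X) - (m/N)^{1/2} n^{1/2}(\bar{{\bf Y}} - \mu_Y) - \delta .
\]
Since $E(\|{\bf X}\|^{2}) < \infty$ and ${\cal X}$ is separable and type $2$, the central limit theorem gives $m^{1/2}(\bar{{\bf X}} - \mu_X) \Rightarrow G({\bf 0},\Sigma)$ and $n^{1/2}(\bar{{\bf Y}} - \mu_Y) \Rightarrow G({\bf 0},\Sigma)$, and the two are independent because the samples are. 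Using $m/N \to \gamma$, $n/N \to 1-\gamma$ and independence, I would conclude ${\bf D}_N \Rightarrow G({\bf 0},(1-\gamma)\Sigma) - G({\bf 0},\gamma\Sigma) - \delta = G(-\delta,\Sigma)$. Part (a) then follows at once, since $nN^{-1}T_{CFF} = \|{\bf D}_N\|^{2}$, so the continuous mapping theorem applied to ${\bf w} \mapsto \|{\bf w}\|^{2}$ gives $nN^{-1}T_{CFF} \Rightarrow \|G(-\delta,\Sigma)\|^{2} = \|G(\delta,\Sigma)\|^{2}$, the final equality by the symmetry of the centered Gaussian.

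For part (b) I would work in the separable Hilbert space and project ${\bf D}_N$ onto the empirical eigenbasis. The key auxiliary facts are the consistency of the pooled sample covariance and of its eigenelements. Because the shift $\Delta_N \to {\bf 0}$ and $E(\|{\bf X}\|^{4}) < \infty$, the pooled covariance operator $\widehat{\Sigma}$ converges in probability to $\Sigma$ in the Hilbert--Schmidt norm; combined with the separation $\lambda_1 > \cdots > \lambda_L > \lambda_{L+1} > 0$ and standard perturbation bounds for covariance operators, this yields $\widehat{\lambda}_k \to_p \lambda_k$ and $\widehat{\psi}_k \to_p \psi_k$ (up to sign) for $k = 1,\ldots,L$, the sign ambiguity being immaterial since only squared projections enter. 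Writing $\langle {\bf D}_N, \widehat{\psi}_k\rangle = \langle {\bf D}_N, \psi_k\rangle + \langle {\bf D}_N, \widehat{\psi}_k - \psi_k\rangle$ and noting $\|{\bf D}_N\| = O_p(1)$ while $\|\widehat{\psi}_k - \psi_k\| = o_p(1)$, the remainder is $o_p(1)$, so jointly $(\langle {\bf D}_N, \widehat{\psi}_1\rangle,\ldots,\langle {\bf D}_N, \widehat{\psi}_L\rangle)$ converges weakly to $(\langle G(-\delta,\Sigma),\psi_1\rangle,\ldots,\langle G(-\delta,\Sigma),\psi_L\rangle)$. Since the $\psi_k$ diagonalize $\Sigma$, these limiting coordinates are independent with $\langle G(-\delta,\Sigma),\psi_k\rangle \sim N(-\beta_k,\lambda_k)$, whence $\langle {\bf D}_N, \widehat{\psi}_k\rangle^{2} \Rightarrow \lambda_k \chi^{2}_{(1)}(\beta_k^{2}/\lambda_k)$ with independent chi-square variables. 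Summing and applying the continuous mapping theorem gives the limit for $mnN^{-1}T_{HKR1}$; multiplying the $k$-th term by $\widehat{\lambda}_k^{-1} \to_p \lambda_k^{-1}$ and invoking Slutsky cancels $\lambda_k$ and produces the stated limit for $mnN^{-1}T_{HKR2}$.

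The hard part will be in (b): establishing the Hilbert--Schmidt consistency of the pooled sample covariance (this is where the fourth-moment hypothesis is used), transferring it via the eigenvalue-gap condition to consistency of the empirical eigenvectors, and then legitimately coupling the \emph{weak} convergence of ${\bf D}_N$ with the \emph{in-probability} convergence of the $\widehat{\psi}_k$ to obtain the joint weak limit of the projections. One must also verify carefully that the limiting coordinate Gaussians are genuinely independent, which is exactly what the eigenbasis of $\Sigma$ delivers. Everything else reduces to routine central limit theorem, Slutsky, and continuous-mapping bookkeeping.
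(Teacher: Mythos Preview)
Your proposal is correct and follows essentially the same route as the paper's proof. Both arguments reduce part~(a) to the CLT for $(mn/N)^{1/2}(\bar{{\bf X}}-\bar{{\bf Y}})$ followed by the continuous map ${\bf w}\mapsto\|{\bf w}\|^{2}$; for part~(b) both project onto the true eigenvectors $\psi_{k}$, show that replacing $\widehat{\psi}_{k}$ by $\widehat{c}_{k}\psi_{k}$ introduces only an $o_{P}(1)$ error (the paper simply invokes Theorem~5.3 of Horv\'ath--Kokoszka for this, whereas you sketch the Hilbert--Schmidt consistency and perturbation argument yourself), and then identify the limit as the appropriate sum of independent non-central chi-squares via Slutsky and continuous mapping.
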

\indent For evaluating the asymptotic powers of different tests under shrinking location shifts, we have considered some probability distributions in $L_{2}[0,1]$. Let ${\bf X} = \sum_{k=1}^{\infty} Z_{k}\phi_{k}$, where the $Z_{k}$'s are independent random variables, and $\phi_{k}(t) = \sqrt{2}sin\{(k-0.5){\pi}t\}$ for $k \geq 1$, which form an orthonormal basis of $L_{2}[0,1]$. We have considered two cases, namely, $Z_{k}/\sigma_{k}$ having a $N(0,1)$ distribution and a t distribution with $5$ degrees of freedom, where $\sigma_{k} = \{(k-0.5)\pi\}^{-1}$ for each $k \geq 1$. Both of these distributions satisfy the assumptions made in Theorems \ref{thm2} and \ref{thm7}. These two cases correspond to the Karhunen-Lo\`eve expansions of the standard Brownian motion (the sBm distribution) and the centered t process on $[0,1]$ with $5$ degrees of freedom (the t($5$) distribution) and covariance kernel $K(t,s) = \min(t,s)$ (see, e.g., \cite{YTY07}), respectively. Recall that ${\bf Y}$ is distributed as ${\bf X} + \delta(mn/N)^{-1/2}$, and we have considered three choices of $\delta$, namely, $\delta_{1}(t) = c$, $\delta_{2}(t) = ct$ and $\delta_{3}(t) = ct(1-t)$, where $t \in [0,1]$ and $c > 0$. For evaluating the asymptotic powers of different tests using Theorems \ref{thm2} and \ref{thm7}, the expectations appearing in $J_{{\bf 0}}(\delta)$ and $\Gamma_{1}$ can be evaluated numerically using averages over Monte-Carlo replications of relevant random objects. For an appropriately large $d$, the eigenvalues and the eigenvectors of $\Gamma_{1}$ and $\Sigma$ can be approximated by those of the $d \times d$ covariance matrices associated with the values of the sample functions at $d$ equispaced points in $[0,1]$. Figure \ref{Fig:3} shows the plots of the ratios of the asymptotic powers of the tests based on $T_{CFF}$, $T_{HKR1}$ and $T_{HKR2}$ to those of our test using $T_{WMW}$. It is seen that all the tests attain the $5\%$ nominal level asymptotically for both of the distributions, and the curves in each plot in Figure \ref{Fig:3} meet at $c = 0$. The asymptotic powers of the tests based on $T_{CFF}$ and $T_{HKR1}$ are close for all the situations considered. The test using $T_{HKR2}$ outperforms both of them for $\delta_{1}(t)$ and $\delta_{3}(t)$, but is asymptotically less powerful for $\delta_{2}(t)$ for both the distributions considered. The asymptotic powers of the tests based on $T_{CFF}$ and $T_{HKR1}$ are close to that of our test using $T_{WMW}$ for $\delta_{2}(t)$ under the sBm distribution, and in all other cases, our test outperforms them. The test based on $T_{HKR2}$ outperforms our test for $\delta_{3}(t)$ for both the distributions, while it is asymptotically more powerful for small $c$ values for $\delta_{1}(t)$ under the sBm distribution. In other cases, our test outperforms this test.
\begin{figure}[ht]
\centering
 \includegraphics[height=4in,width=5in]{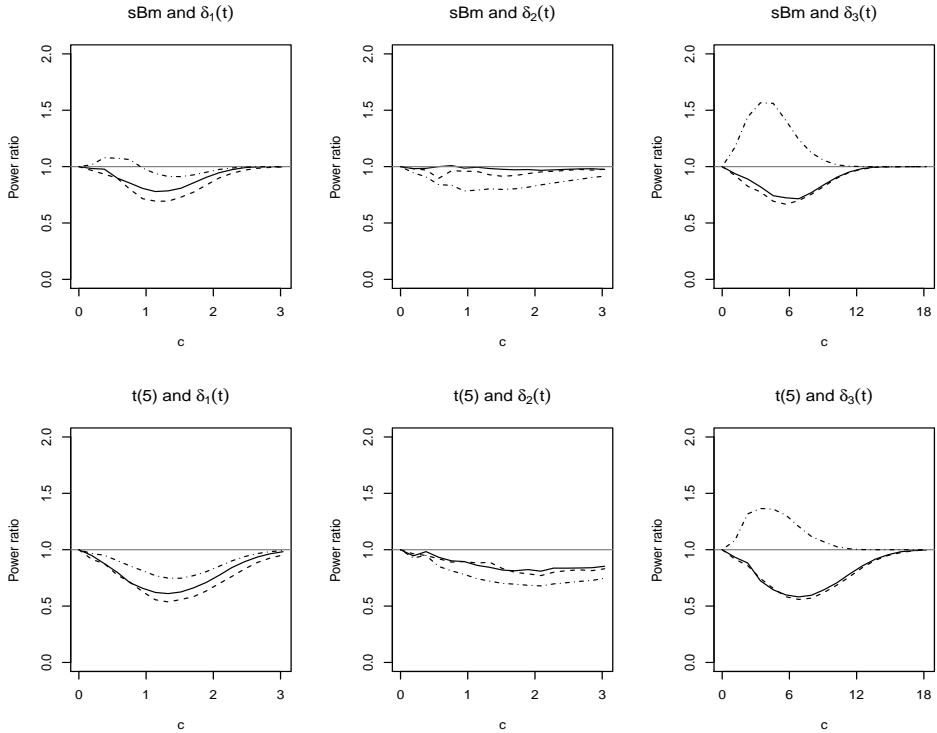}
\caption{Plots of the ratios of the asymptotic powers of the tests based on $T_{CFF}$ ({\it solid line}), $T_{HKR1}$ ({\it dashed line}) and $T_{HKR2}$ ({\it dot-dashed line}) to those of our test using $T_{WMW}$ for the sBm and the t($5$) distributions under shrinking location shifts.}
\label{Fig:3}
\end{figure}

\section{Finite sample powers of different tests}
\label{4.1}
\indent In this section, we shall carry out a comparative study of the finite sample empirical powers of the tests considered in Sections \ref{2} and \ref{3} and a few other tests. Once again, let ${\bf X} = \sum_{k=1}^{\infty} Z_{k}\phi_{k}$, where the $Z_{k}$'s and the $\phi_{k}$'s are as in Section \ref{3}, and the distributions of ${\bf Y}$ and ${\bf X}$ differ by a shift $\Delta$. Here, we have considered three cases, namely, $Z_{k}/\sigma_{k}$ having a $N(0,1)$ distribution (the sBm distribution) and $Z_{k}/\sigma_{k} = U_{k}(V/r)^{-1/2}$, where $U_{k}$'s are independent $N(0,1)$ variables and $V$ has a chi-square distribution with $r$ degrees of freedom for $r = 1$ and $5$ independent of the $U_{k}$'s for each $k \geq 1$ (the t($1$) and the t($5$) distributions, respectively), where the $\sigma_{k}$'s are as in Section \ref{3}. The t($1$) distribution is included to investigate the performance of our test and its competitors when the moment conditions on the probability distribution required by the tests based on $T_{CFF}$, $T_{HKR1}$ and $T_{HKR2}$ (see \cite{CFF04} and \cite{HKR13}) fail to hold. Note that the conditions assumed for our test (see Theorem \ref{thm1}) hold for all the distributions considered here. We have chosen $m = n = 15$, and each sample curve is observed at $250$ equispaced points in $[0,1]$. Three types of shifts are considered, namely, $\Delta_{1}(t) = c$, $\Delta_{2}(t) = ct$ and $\Delta_{3}(t) = ct(1-t)$, where $t \in [0,1]$ and $c > 0$ (cf. Section \ref{3}). For each simulated dataset, all the test statistics and their critical values are computed in the same way as in Section \ref{2}, where we analyzed some real datasets. All the sizes and the powers are evaluated by averaging the results of $1000$ Monte-Carlo simulations. Figure \ref{Fig:1} shows the plots of the ratios of the finite sample powers of the competing tests to those of our test at the nominal level of $5\%$. \\
\indent The sizes of all the tests considered in Sections \ref{2} and \ref{3} are close to the nominal $5\%$ level for the sBm and the t($5$) distributions. For the t($1$) distribution, all those tests have sizes around $1.5\%$, while our test using $T_{WMW}$ has size $4.4\%$. The test using $T_{HKR2}$ outperforms the tests based on $T_{CFF}$ and $T_{HKR1}$ in all the situations considered except for $\Delta_{2}(t)$ under the sBm and the t($5$) distributions, where it is less powerful for larger $c$ values. The tests based on $T_{CFF}$ and $T_{HKR1}$ have similar powers for all the models considered. Their powers coincide for all the shifts under the t($1$) distribution, where our test outperforms all three competitors. For all the shifts under the t($5$) distribution, our test outperforms both the tests using $T_{CFF}$ and $T_{HKR1}$. For $\Delta_{1}(t)$ and $\Delta_{2}(t)$ under the t($5$) distribution, our test is more powerful than the test using $T_{HKR2}$ except for small values of $c$, and this latter test 
outperforms our test for $\Delta_{3}(t)$. The behaviour of all the tests under the sBm distribution is similar to that under the t($5$) distribution, except for $\Delta_{2}(t)$. For $\Delta_{2}(t)$ under the sBm distribution, the three competing tests have an edge over our test. These finite sample results are broadly in conformity with the asymptotic results in Section \ref{3}.
\begin{figure}
\centering
 \includegraphics[height=6.4in,width=5.8in]{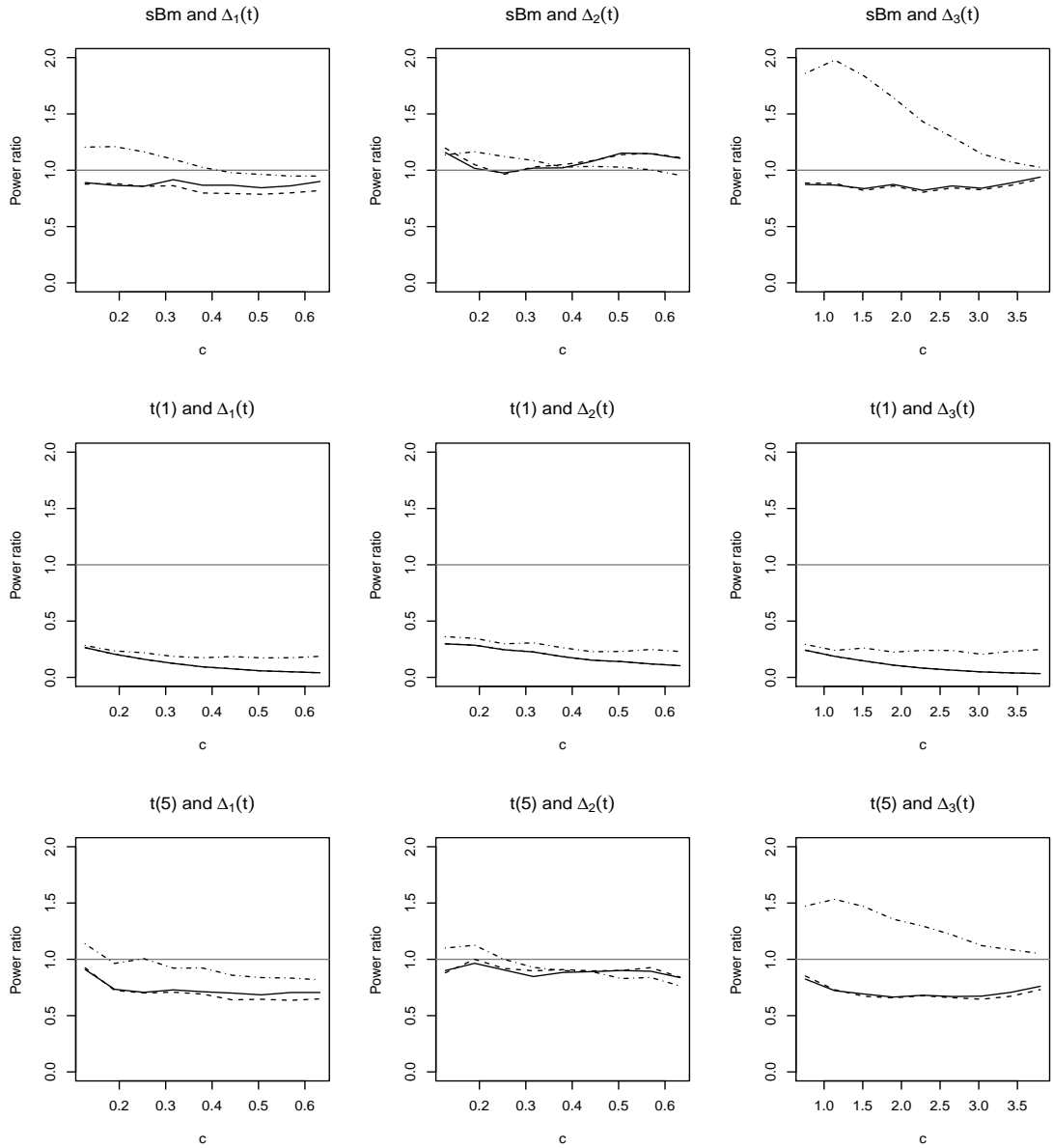}
\caption{Plots of the ratios of the finite sample powers of the tests based on $T_{CFF}$ ({\it solid line}), $T_{HKR1}$ ({\it dashed line}) and $T_{HKR2}$ ({\it dot-dashed line}) to those of our test using $T_{WMW}$ for the sBm, the t($1$) and the t($5$) distributions under location shifts.}
\label{Fig:1}
\end{figure}
\indent We have compared the finite sample powers of our test and some more tests available in the literature. A pointwise t-test with an appropriate p-value correction for multiple testing was studied by \cite{CL08} for testing the equality of means of two Gaussian functional datasets. \cite{SF04} studied some $F$ test for linear models involving Gaussian functional data, and we consider the two sample version of this test. \cite{CAFB10} studied an analysis of variance test for functional data based on multiple testing using random univariate linear projections of the data. The two sample version of their test reduces to the Wilcoxon--Mann--Whitney test based on such projections of the data. \cite{GBRSS12} studied a test for comparing two probability distributions on metric spaces, which may not necessarily differ by a shift in the location. We have used their test based on the asymptotic distribution of the unbiased statistic MMD$_{u}^{2}$ (see Section $5$ in \cite{GBRSS12}). For comparing two finite 
dimensional probability distributions, \cite{HT02} studied a permutation test based on the ranks of the distances between the sample observations, while \cite{Rose05} studied a test based on a notion of adjacency. The authors of both papers pointed out that these tests can be used for infinite dimensional functional data as well. The asymptotic behaviours of none of the above-mentioned tests under the type of shrinking shifts considered in Section \ref{3} are known in the literature, and such an analysis is beyond the scope of this paper. We only carry out a finite sample empirical power comparison of our test based on $T_{WMW}$ with these tests. We have used the $L_{2}$-distance between the pointwise ranks as the distance function for implementing the Rosenbaum test. The discrete distribution of the test statistic made the size of this test much less than the nominal significance level for the small sample sizes that we have considered. To rectify this, we considered a randomized version of the test, and this improved the size as well as the power of the test. We have chosen $30$ random projections for implementing the Cuesta-Albertos and Febrero-Bande test, as recommended by these authors. We have chosen the radial basis function as the kernel for the Gretton et al. test and used the codes provided by these authors. All the other tests are implemented using our own codes, and all the sizes and the powers are evaluated by averaging the results of $1000$ Monte-Carlo simulations. Figure \ref{Fig:2} shows the plots of the ratios of the powers of these tests to those of our test using $T_{WMW}$. \\
\indent For all the distributions considered, the sizes of the Rosenbaum test, the Hall--Tajvidi test, the Gretton et al. test and the Cox--Lee test were close to the nominal $5\%$ level. The sizes of the Cuesta-Albertos and Febrero-Bande test were around $2.6\%$ in all our simulations. The sizes of the Shen--Faraway test were much smaller than the nominal level for all the distributions considered, and it was zero for the t($1$) distribution. Figure \ref{Fig:2} shows that our test based on $T_{WMW}$ is uniformly more powerful than the Cuesta-Albertos and Febrero-Bande test and the Shen--Faraway test in all the situations considered. Our test outperforms the Rosenbaum test, the Hall--Tajvidi test and the Gretton et al. test in all but the following situations. The Rosenbaum test and the Hall--Tajvidi test are more powerful than our test for small values of $c$ for $\Delta_{2}(t)$ under all the distributions. The Hall--Tajvidi test is also more powerful than our test for small $c$ values for $\Delta_{1}(t)$ and $\
Delta_{3}(t)$ under the t($1$) distribution. The Gretton et al. test has a slight edge over our test for all the shifts under the t($1$) distribution. \\
\indent Except for small $c$ values, our test using $T_{WMW}$ is more powerful than the Cox--Lee test for $\Delta_{2}(t)$ under all the distributions and for the shift $\Delta_{3}(t)$ under the t($1$) distribution. For $\Delta_{3}(t)$ under the sBm and the t($5$) distributions, the Cox--Lee test has an edge over our test. For $\Delta_{1}(t)$, the Cox--Lee test is far more superior to our test for all of the three distributions considered, and we have not plotted the ratios of its power to those of our test, since the values lie beyond the plotting ranges used in Figure \ref{Fig:2}. The reason for such a behaviour of this test is that the coordinate random variable at $t = 0.0001$ (which is closest to zero in our computations) has scale parameter equal to $0.0001$ for all the distributions considered. Consequently, for this coordinate and $\Delta_{1}(t)$, the adjusted p-values of the t-test used in the Cox--Lee procedure are $\leq 0.05$ for many of the simulations. The Cox--Lee test rejects $H_{0}$ for such 
simulations resulting in the high power of this test for this shift.
\begin{figure}
\centering
 \includegraphics[height=6.4in,width=5.8in]{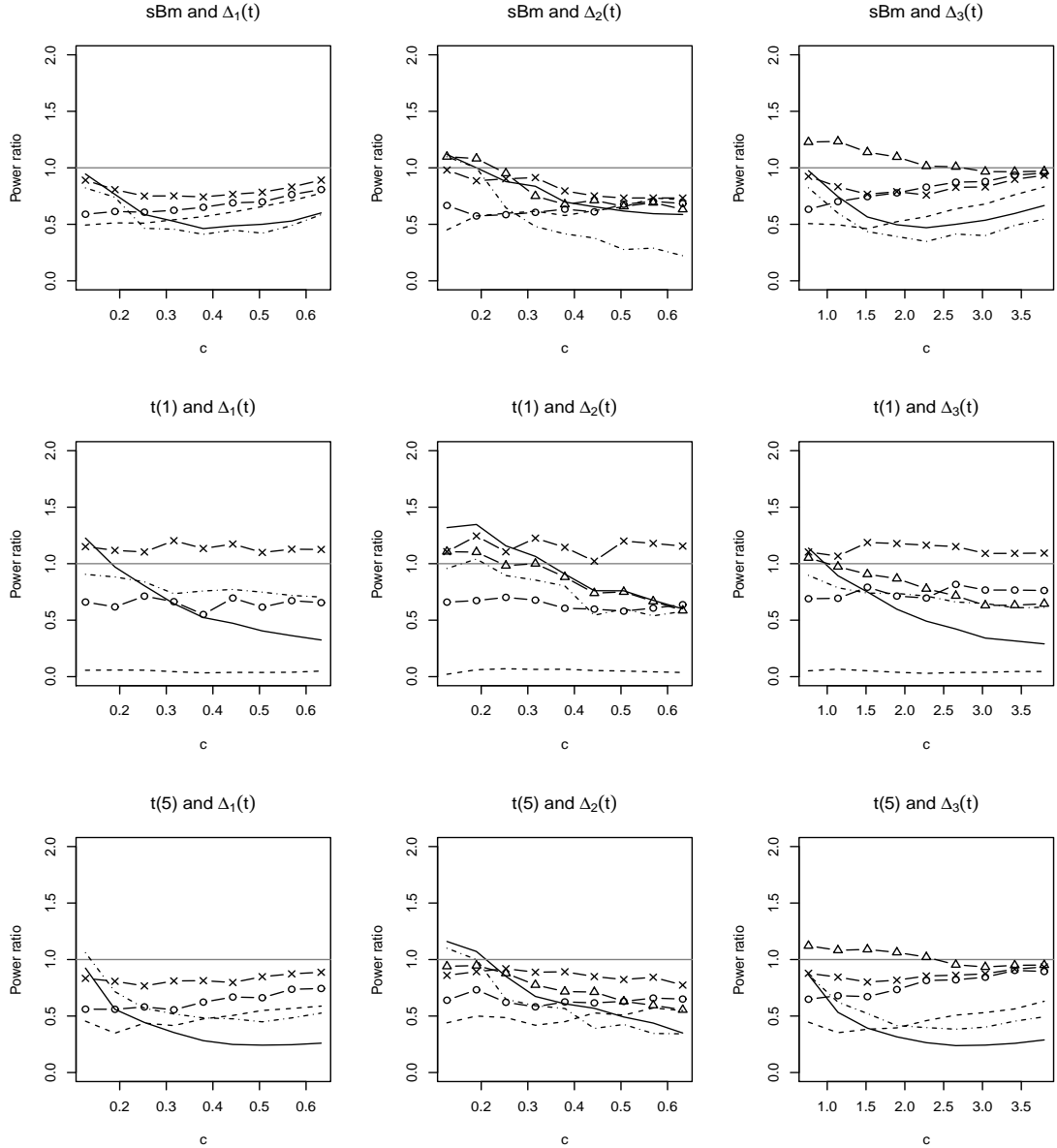}
\caption{Plots of the ratios of the finite sample powers of the Hall--Tajvidi test ({\it solid line}), the Shen--Faraway test ({\it dashed line}), the Rosenbaum test ({\it dot-dashed line}), the Cox--Lee test (--$\vartriangle$--), the Cuesta-Albertos and Febrero-Bande test  (--$\circ$--) and the Gretton et al. test (--$\times$--) to those of our test using $T_{WMW}$ for the sBm, the t($1$) and the t($5$) distributions under location shifts.}
\label{Fig:2}
\end{figure}

\section*{Acknowledgement}
Research of the first author is partially supported by the SPM Fellowship of the Council of Scientific and Industrial Research, Government of India.

\section*{Appendix -- Proofs of the theorems}
\label{7}
\begin{proof}[Proof of Theorem \ref{thm1}]
Observe that $T_{WMW} - \mu$ is a two-sample Banach space valued U-statistic with kernel ${\bf h}({\bf x},{\bf y}) = SGN_{{\bf y} - {\bf x}} - \mu$ satisfying $E\{{\bf h}({\bf X},{\bf Y})\} = {\bf 0}$. By the Hoeffding decomposition for Banach space valued U-statistics (see, e.g., Section $1.2$ in \cite{Boro96}), we have
\begin{eqnarray*}
 T_{WMW} - \mu = \frac{1}{m} \sum_{i=1}^{m} E\{{\bf h}({\bf X}_{i},{\bf Y})\mid{\bf X}_{i}\} + \frac{1}{n} \sum_{j=1}^{n} E\{{\bf h}({\bf X},{\bf Y}_{j})\mid{\bf Y}_{j}\} + {\bf R}_{m,n}.
\end{eqnarray*}
So, ${\bf R}_{m,n} = (mn)^{-1} \sum_{i=1}^{m}\sum_{j=1}^{n} \widetilde{{\bf h}}({\bf X}_{i},{\bf Y}_{j})$, where $\widetilde{{\bf h}}({\bf x},{\bf y}) = {\bf h}({\bf x},{\bf y}) - E\{{\bf h}({\bf X},{\bf Y})\mid{\bf X}={\bf x}\} - E\{{\bf h}({\bf X},{\bf Y})\mid{\bf Y}={\bf y}\}$. Let $\Phi({\bf X}_{i}) = \sum_{j=1}^{n} \widetilde{{\bf h}}({\bf X}_{i},{\bf Y}_{j})$. Since $E\{\widetilde{{\bf h}}({\bf X},{\bf Y})\mid{\bf Y}={\bf y}\} = {\bf 0}$ for all ${\bf y} \in {\cal X}$, using the definition of type $2$ Banach spaces mentioned in Section \ref{2}, we get
\begin{eqnarray}
 E(||{\bf R}_{m,n}||^{2}\mid{\bf Y}_{j};j=1,2,\ldots,n) &=& \frac{1}{m^{2}n^{2}} E\left\{\left\|\sum_{i=1}^{m} \Phi({\bf X}_{i})\right\|^{2}\mid{\bf Y}_{j};j=1,\ldots,n\right\}  \nonumber \\
 &\leq& \frac{b}{m^{2}n^{2}} \sum_{i=1}^{m} E\left\{||\Phi({\bf X}_{i})||^{2}\mid{\bf Y}_{j};j=1,\ldots,n\right\}. \label{eq1.0}
\end{eqnarray}
Taking expectations of both sides of \eqref{eq1.0} with respect to ${\bf Y}_{j}$ for $1 \leq j \leq n$, and using the fact that the ${\bf X}_{i}$'s are identically distributed, we get
\begin{eqnarray}
E(||{\bf R}_{m,n}||^{2}) &\leq& \frac{b}{mn^{2}} E\left\{\left\|\sum_{j=1}^{n} \widetilde{{\bf h}}({\bf X}_{1},{\bf Y}_{j})\right\|^{2}\right\}.  \label{eq1.1}
\end{eqnarray}
Since $E\{\widetilde{{\bf h}}({\bf X},{\bf Y})\mid{\bf X}={\bf x}\} = {\bf 0}$ for all ${\bf x} \in {\cal X}$, once again from the definition of type $2$ Banach spaces and the fact that the ${\bf Y}_{j}$'s are identically distributed, we get
\begin{eqnarray}
E\left\{\left\|\sum_{j=1}^{n} \widetilde{{\bf h}}({\bf X}_{1},{\bf Y}_{j})\right\|^{2}\right\} &=& E\left[E\left\{\left\|\sum_{j=1}^{n} \widetilde{{\bf h}}({\bf X}_{1},{\bf Y}_{j})\right\|^{2}\mid{\bf X}_{1}\right\}\right]  \nonumber \\
&\leq& bE\left[\sum_{j=1}^{n} E\left\{\left\|\widetilde{{\bf h}}({\bf X}_{1},{\bf Y}_{j})\right\|^{2}\mid{\bf X}_{1}\right\}\right] \nonumber \\
&=& bnE\left\{\left\|\widetilde{{\bf h}}({\bf X}_{1},{\bf Y}_{1})\right\|^{2}\right\}.    \label{eq1.2}
\end{eqnarray}	
Since $||SGN_{{\bf x}}|| \leq 1$ for all ${\bf x} \in {\cal X}$, we have $||\widetilde{{\bf h}}({\bf x},{\bf y})|| \leq 4$ for all ${\bf x}, {\bf y} \in {\cal X}$. Combining this fact with \eqref{eq1.1} and \eqref{eq1.2}, we have
\begin{eqnarray*}
E(||{\bf R}_{m,n}||^{2}) &\leq& \frac{b^{2}}{mn} E\left\{\left\|\widetilde{{\bf h}}({\bf X}_{1},{\bf Y}_{1})\right\|^{2}\right\} \ \leq \ \frac{16b^{2}}{mn}.
\end{eqnarray*}
Since $m/N \rightarrow \gamma \in (0,1)$, we get $E\{||(mn/N)^{1/2}{\bf R}_{m,n}||^{2}\} \rightarrow 0$ as $m,n \rightarrow \infty$. Hence,  $(mn/N)^{1/2}{\bf R}_{m,n}$ converges to ${\bf 0}$ {\it in probability} as $m,n \rightarrow \infty$. We note here that a similar result has been proved in \cite{Boro96} for Banach space valued U-statistics, but the proof given above is simpler and uses the fact that ${\bf h}$ is a bounded kernel. Such a result for real-valued U-statistics is discussed in Chapter $5$ in \cite{Serf80} under the assumption that the kernel has a finite second moment. Now, $m^{-1/2} \sum_{i=1}^{m} E\{{\bf h}({\bf X}_{i},{\bf Y})\mid{\bf X}_{i}\}$ and $n^{-1/2} \sum_{j=1}^{n} E\{{\bf h}({\bf X},{\bf Y}_{j})\mid{\bf Y}_{j}\}$ converge {\it weakly} to $G({\bf 0},\Gamma_{1})$ and $G({\bf 0},\Gamma_{2})$, respectively, as $m, n \rightarrow \infty$ by the central limit theorem for independent and identically distributed random variables in a separable and type $2$ Banach space (see, e.g, Theorem $7.5(i)$ in \cite{AG80}). So, the independence of these two sums, the assumption that $m/N \rightarrow \gamma$, and the fact that $(mn/N)^{1/2}{\bf R}_{m,n}$ converges to ${\bf 0}$ {\it in probability} complete the proof.
\end{proof}

\begin{proof}[Proof of Theorem \ref{thm2}]
Define $\rho(\Delta_{N}) = E(SGN_{{\bf Y} - {\bf X}})$. Applying the Hoeffding decomposition for Banach space valued U-statistics as in the proof of Theorem \ref{thm1}, it follows that
\begin{eqnarray}
 T_{WMW} - \rho(\Delta_{N}) &=& \frac{1}{m} \sum_{i=1}^{m} \{E(SGN_{{\bf Y} - {\bf X}_{i}}\mid{\bf X}_{i}) - \rho(\Delta_{N})\} \nonumber \\
 && + \ \frac{1}{n} \sum_{j=1}^{n} \{E(SGN_{{\bf Y}_{j} - {\bf X}}\mid{\bf Y}_{j}) - \rho(\Delta_{N})\} + {\bf S}_{m,n}.  \label{eq2.1}
\end{eqnarray}
Arguing as in the proof of Theorem \ref{thm1}, it can be shown that $E(||{\bf S}_{m,n}||^{2}) \leq 16b^{2}/mn$ for each $m, n \geq 1$. Thus, $(mn/N)^{1/2}{\bf S}_{m,n} \rightarrow {\bf 0}$ {\it in probability} as $m,n \rightarrow \infty$ under the sequence of shrinking shifts. \\
\indent Note that $\rho(\Delta_{N}) = E(SGN_{{\bf Z} - {\bf X} + \Delta_{N}})$, where ${\bf Z}$ is an independent copy of ${\bf X}$. So, it follows from \eqref{eq3.1.1} in Section \ref{3} that 
\begin{eqnarray}
(mn/N)^{1/2}\rho(\Delta_{N}) \longrightarrow J_{{\bf 0}}(\delta)  \label{eq2.2}
\end{eqnarray}
as $m, n \rightarrow \infty$. \\
\indent We next show the asymptotic Gaussianity of the first term on the right hand side of \eqref{eq2.1} after it is multiplied by $m^{1/2}$. Let us write $\Psi_{N}({\bf X}_{i}) = m^{-1/2}\{E(SGN_{{\bf Y} - {\bf X}_{i}}\mid{\bf X}_{i}) - \rho(\Delta_{N})\}$. Note that $E\{\Psi_{N}({\bf X}_{i})\} = {\bf 0}$. In order to show the asymptotic Gaussianity of $\sum_{i=1}^{m} \Psi_{N}({\bf X}_{i})$, it is enough to show that the triangular array $\{\Psi_{N}({\bf X}_{1}),\ldots,\Psi_{N}({\bf X}_{m})\}_{m=1}^{\infty}$ of rowwise independent and identically distributed random elements satisfy the conditions of Corollary $7.8$ in \cite{AG80}. \\
\indent Observe that for any $\epsilon > 0$,
\begin{eqnarray*}
 \sum_{i=1}^{m} P(||\Psi_{N}({\bf X}_{i})|| > \epsilon) \leq \sum_{i=1}^{m} E\{||E(SGN_{{\bf Y} - {\bf X}_{i}}\mid{\bf X}_{i}) - \rho(\Delta_{N})||^{3}\}/m^{3/2} \leq 8m^{-1/2}.
\end{eqnarray*}
Thus, $\lim_{m \rightarrow \infty} \sum_{i=1}^{m} P(||\Psi_{N}({\bf X}_{i})|| > \epsilon) = 0$ for every $\epsilon > 0$, which ensures that condition ($1$) of Corollary $7.8$ in \cite{AG80} holds. \\
\indent We next verify condition ($2$) of Corollary $7.8$ in \cite{AG80}. Let us fix ${\bf f} \in {\cal X}^{**}$. Since $||SGN_{{\bf x}}|| = 1$ for all ${\bf x} \neq {\bf 0}$, we can choose $\delta = 1$ in that condition ($2$). Then, using the linearity of ${\bf f}$, we have
\begin{eqnarray}
\sum_{i=1}^{m} E[{\bf f}^{2}\{\Psi_{N}({\bf X}_{i})\}] = m^{-1} \sum_{i=1}^{m} E[\{W_{N,i} - E(W_{N,i})\}^{2}],   \label{eq2.3}
\end{eqnarray}
where $W_{N,i} = {\bf f}\{E(SGN_{{\bf Y} - {\bf X}_{i}}\mid{\bf X}_{i})\}$. Since the ${\bf X}_{i}$'s are identically distributed, the right hand side in \eqref{eq2.3} simplifies to $E[\{W_{N,1} - E(W_{N,1})\}^{2}]$. Note that $W_{N,1} = {\bf f}\{E(SGN_{{\bf Z} - {\bf X}_{1} + \Delta_{N}}\mid{\bf X}_{1})\}$, where ${\bf Z}$ is an independent copy of ${\bf X}_{1}$. Since the norm in ${\cal X}$ is assumed to be twice G\^ateaux differentiable, it follows from Theorem $4.6.15$(a) and Proposition $4.6.16$ in \cite{BV10} that the norm in ${\cal X}$ is Fr\'echet differentiable. This in turn implies that the map ${\bf x} \mapsto SGN_{{\bf x}}$ is continuous on ${\cal X}\backslash \{{\bf 0}\}$ (see, e.g., Corollary $4.2.12$ in \cite{BV10}). Using this fact, it can be shown that 
\begin{eqnarray}
E(SGN_{{\bf Z} - {\bf X}_{1} + \Delta_{N}}\mid{\bf X}_{1}) \longrightarrow E(SGN_{{\bf Z} - {\bf X}_{1}}\mid{\bf X}_{1})    \label{eq2.4}
\end{eqnarray}
as $m, n \rightarrow \infty$ for {\it almost all} values of ${\bf X}_{1}$. Thus, we get the convergence of $E(W_{N,1})$ to $E[{\bf f}\{E(SGN_{{\bf Z} - {\bf X}_{1}}\mid{\bf X}_{1})\}]$ as $m,n \rightarrow \infty$. Similarly, it follows that $E(W_{N,1}^{2})$ converges to $E[{\bf f}^{2}\{E(SGN_{{\bf Z} - {\bf X}_{1}}\mid{\bf X}_{1})\}]$ as $m,n \rightarrow \infty$. So, $\sum_{i=1}^{m} E[{\bf f}^{2}\{\Psi_{N}({\bf X}_{i})\}] \rightarrow \Gamma_{1}({\bf f},{\bf f})$ as $m, n \rightarrow \infty$, where $\Gamma_{1}$ is as defined before Theorem \ref{thm1} in Section \ref{2}. This completes the verification of condition ($2$) of Corollary $7.8$ in \cite{AG80}. \\
\indent Finally, for the verification of condition ($3$) of Corollary $7.8$ in \cite{AG80}, suppose that $\{{\cal F}_{k}\}_{k \geq 1}$ is a sequence of finite dimensional subspaces of ${\cal X}^{*}$ such that ${\cal F}_{k} \subseteq {\cal F}_{k+1}$ for all $k \geq 1$, and the closure of $\bigcup_{k=1}^{\infty} {\cal F}_{k}$ is ${\cal X}^{*}$. Such a sequence of subspaces exists because of the separability of ${\cal X}^{*}$. For any ${\bf x} \in {\cal X}^{*}$ and any $k \geq 1$, we define $d({\bf x},{\cal F}_{k}) = \inf\{||{\bf x} - {\bf y}|| : {\bf y} \in {\cal F}_{k}\}$. It is straightforward to verify that for every $k \geq 1$, the map ${\bf x} \mapsto d({\bf x}, {\cal F}_{k})$ is continuous and bounded on any closed ball in ${\cal X}^{*}$. Thus, using \eqref{eq2.4}, it follows that $\rho(\Delta_{N}) \rightarrow 0$ as $m, n \rightarrow \infty$, and we have
\begin{eqnarray*}
 \sum_{i=1}^{m} E[d^{2}\{\Psi_{N}({\bf X}_{i}), {\cal F}_{k}\}] &=& m^{-1} \sum_{i=1}^{m} E[d^{2}\{E(SGN_{{\bf Z} - {\bf X}_{i} + \Delta_{N}}\mid{\bf X}_{i}) - \rho(\Delta_{N}), {\cal F}_{k}\}]  \\
&=& E[d^{2}\{E(SGN_{{\bf Z} - {\bf X}_{1} + \Delta_{N}}\mid{\bf X}_{1}) - \rho(\Delta_{N}), {\cal F}_{k}\}]  \\
&\longrightarrow& E[d^{2}\{E(SGN_{{\bf Z} - {\bf X}_{1}}\mid{\bf X}_{1}), {\cal F}_{k}\}]
\end{eqnarray*}
as $m, n \rightarrow \infty$. From the choice of the ${\cal F}_{k}$'s, it can be shown that $d({\bf x}, {\cal F}_{k}) \rightarrow 0$ as $k \rightarrow \infty$ for all ${\bf x} \in {\cal X}^{*}$. So, we have
\begin{eqnarray*}
\lim_{k \rightarrow \infty} E[d^{2}\{E(SGN_{{\bf Z} - {\bf X}_{1}}\mid{\bf X}_{1}), {\cal F}_{k}\}] = 0,
\end{eqnarray*}
and this completes the verification of condition ($3$) of Corollary $7.8$ in \cite{AG80}. \\
\indent Thus, $\sum_{i=1}^{m} \Psi_{N}({\bf X}_{i})$ converges {\it weakly} to a centered Gaussian random element in ${\cal X}^{*}$ as $m, n \rightarrow \infty$. Further, its asymptotic covariance is $\Gamma_{1}$, which was obtained while checking condition ($2$) of Corollary $7.8$ in \cite{AG80}. It follows from similar arguments that when the second term on the right hand side of \eqref{eq2.1} is multiplied by $n^{1/2}$, it also converges {\it weakly} to a Gaussian random element in ${\cal X}^{*}$ with the same distribution as $m, n \rightarrow \infty$. Hence, using the independence of the first two terms on the right hand side of \eqref{eq2.1}, we have
\begin{eqnarray*}
 (mn/N)^{1/2}\{T_{WMW} - \rho(\Delta_{N})\} \longrightarrow G({\bf 0},\Gamma_{1})
\end{eqnarray*}
{\it weakly} as $m,n \rightarrow \infty$ under the sequence of shrinking shifts. This, together with \eqref{eq2.2}, completes the proof of the theorem.
\end{proof}

\begin{proof}[Proof of Theorem \ref{thm7}]
(a) Let us observe that $nN^{-1}T_{CFF} = mnN^{-1}||\bar{{\bf X}} - \bar{{\bf Y}}||^{2}$. For each $N \geq 1$, ${\bf Y}$ has the same distribution as that of ${\bf Z} + \Delta_{N}$, where ${\bf Z}$ is an independent copy of ${\bf X}$. Now, by the central limit theorem for independent and identically distributed random elements in a separable and type $2$ Banach space (see, e.g. Theorem $7.5$(i) in \cite{AG80}), it follows that $(mn/N)^{1/2}(\bar{{\bf Z}} - \bar{{\bf X}})$ converges {\it weakly} to $G({\bf 0},\Sigma)$ as $m, n \rightarrow \infty$. Thus, $(mn/N)^{1/2}\left(\bar{{\bf Y}} - \bar{{\bf X}}\right)$, which has the same distribution as that of $(mn/N)^{1/2}\left(\bar{{\bf Z}} - \bar{{\bf X}} + \Delta_{N}\right)$, converges {\it weakly} to $G(\delta,\Sigma)$ as $m, n \rightarrow \infty$. This proves part (a) of the proposition. \\
(b) Let ${\bf v} = (\langle\bar{{\bf X}} - \bar{{\bf Y}},\psi_{1}\rangle,\ldots,\langle\bar{{\bf X}} - \bar{{\bf Y}},\psi_{L}\rangle)^{T}$ and $\boldsymbol\beta = (\beta_{1},\ldots,\beta_{L})^{T}$. It follows from the central limit theorem in $\mathbb{R}^{L}$ that $(mn/N)^{1/2}\{{\bf v} - (mn/N)^{-1/2}\boldsymbol\beta\}$ converges {\it weakly} to $N_{L}({\bf 0},\Lambda_{L})$ as $m, n \rightarrow \infty$ under the given sequence of shrinking shifts, where $\Lambda_{L}$ is the diagonal matrix $Diag(\lambda_{1},\ldots,\lambda_{L})$. Thus, under the given sequence of shifts, $(mn/N)^{1/2}{\bf v}$ converges {\it weakly} to a $N_{L}(\boldsymbol\beta,\Lambda_{L})$ distribution as $m, n \rightarrow \infty$. \\
\indent From arguments similar to those in the proof of Theorem 5.3 in \cite{HK12}, and using the assumptions in the present theorem, we get
\begin{eqnarray}
&& \max_{1 \leq k \leq L} (mn/N)^{1/2} \left|\langle\bar{{\bf X}} - \bar{{\bf Y}},\widehat{\psi}_{k} - \widehat{c}_{k}\psi_{k}\rangle\right| = o_{P}(1)  \label{eq7.1}
\end{eqnarray}
as $m, n \rightarrow \infty$ under this sequence of shifts. Here $\widehat{\psi}_{k}$ is the empirical version of $\psi_{k}$ and $\widehat{c}_{k} = sign(\langle\widehat{\psi}_{k},\psi_{k}\rangle)$. The limiting distribution of $mnN^{-1} \sum_{k=1}^{L} (\langle\bar{{\bf X}} - \bar{{\bf Y}},\widehat{\psi}_{k}\rangle)^{2}$ is the same as that of $mnN^{-1} \sum_{k=1}^{L} (\langle\bar{{\bf X}} - \bar{{\bf Y}},\widehat{c}_{k}\psi_{k}\rangle)^{2}$ in view of (\eqref{eq7.1}). Since the $\widehat{c}_{k}$'s take values $\pm1$ only, $mnN^{-1} \sum_{k=1}^{L} (\langle\bar{{\bf X}} - \bar{{\bf Y}},\widehat{c}_{k}\psi_{k}\rangle)^{2} = mnN^{-1}||{\bf v}||^{2}$, and the latter converges {\it weakly} to $||N_{L}(\boldsymbol\beta,\Lambda_{L})||^{2}$ as $m, n \rightarrow \infty$.
Thus, $mnN^{-1}T_{HKR1}$ converges {\it weakly} to $\sum_{k=1}^{L} \lambda_{k}\chi^{2}_{(1)}(\beta_{k}^{2}/\lambda_{k})$ under the given sequence of shrinking shifts as $m, n \rightarrow \infty$.  \\
\indent It also follows using similar arguments as in the proof of Theorem 5.3 in \cite{HK12} that under the assumptions of the present theorem, and for the given sequence of shrinking shifts, we have
\begin{eqnarray*}
&& \max_{1 \leq k \leq L} (mn/N)^{1/2} \widehat{\lambda}_{k}^{-1/2}\left|\langle\bar{{\bf X}} - \bar{{\bf Y}},\widehat{\psi}_{k} - \widehat{c}_{k}\psi_{k}\rangle\right| = o_{P}(1)
\end{eqnarray*}
as $m, n \rightarrow \infty$. Similar arguments as in the case of $T_{HKR1}$ now yield the asymptotic distribution of $mnN^{-1}T_{HKR2}$, and this completes the proof.
\end{proof}

\bibliographystyle{apalike.bst}
\bibliography{biblio-file.bib}

\end{document}